\newtheorem{theorem}{Theorem}
\newtheorem{lemma}[theorem]{Lemma}
\newtheorem{corollary}[theorem]{Corollary}
\newtheorem{proposition}[theorem]{Proposition}
\theoremstyle{definition}
\newtheorem{definition}[theorem]{Definition}
\begin{document}
\title{Horizon Visibility Graphs and Time Series Merge Trees are Dual}
\author{Colin Stephen}
\affiliation{Coventry University, UK}
\email{colin.stephen@coventry.ac.uk}
\date{\today}

\begin{abstract}
In this paper we introduce the horizon visibility graph, a simple extension to the popular horizontal visibility graph representation of a time series, and show that it possesses a rigorous mathematical foundation in computational algebraic topology. This fills a longstanding gap in the literature on the horizontal visibility approach to nonlinear time series analysis which, despite a suite of successful applications across multiple domains, lacks a formal setting in which to prove general properties and develop natural extensions. The main finding is that horizon visibility graphs are dual to merge trees arising naturally over a filtered complex associated to a time series, while horizontal visibility graphs are weak duals of these trees. Immediate consequences include availability of tree-based reconstruction theorems, connections to results on the statistics of self-similar trees, and relations between visibility graphs and the emerging field of applied persistent homology.
\end{abstract}
\keywords{Horizontal Visibility Graph; Topological Data Analysis; Merge Tree; Time Series}
\maketitle

{\it Introduction.} The (directed) horizontal visibility graph \cite{Luque2009} or (D)HVG of a time series $\tau=(x_1,\ldots,x_N)$ is a network with nodes $\{1,\ldots,N\}$ and edges $(i,j)$ for each pair $i<j$ such that $i<k<j$ implies $x_k<x_i,x_j$. The undirected version omits the order of $i$ and $j$. Despite its structural simplicity, this graph captures much of the geometry of $\tau$ while remaining invariant under (positive) affine transformations.

Exact analysis and numerical simulation of HVGs shows that their properties, such as degree distributions and block entropies, bear an intimate relation to the dynamic properties of the system generating a time series. For example using the HVG one can determine whether an observed system is chaotic or stochastic and can estimate numerical values of key dynamic parameters including reversibility, Lyapunov exponents and Hurst indices \cite{Zou2019,Nunez2012}. This generality has led to successful applications ranging through cardiology, neurophysiology, meteorology, geophysics, protein dynamics and the financial markets \cite{Madl2016,Zhu2014,Schleussner2015,Braga2016,Zhou2014,Flanagan2016}. In many cases the statistics of HVG degree sequences and their subsequence motifs are the main discriminatory feature, and work is ongoing to fully understand why this feature is so effective from a theoretical context \cite{Gutin2011,Luque2017}. 

Topological data analysis (TDA) for time series follows a seemingly different path \cite{Edelsbrunner2010,Ghrist2008a,Perea2014,Mittal2017}. Beginning from a piecewise linear interpolation $\mathrm{PL}_\tau:\mathbb{R}\to\mathbb{R}$ of $\tau$, or from a distance or density function on a higher dimensional delay embedding of $\tau$, persistent homology tracks how the connected components of $\lambda$-sublevel sets $\{x : \mathrm{PL}_\tau(x) \leq \lambda \}$ merge as the threshold $\lambda$ increases over $\mathbb{R}$. The resulting merge tree is a rooted metric tree which has a natural branch decomposition structure summarised in a multiset of intervals called the barcode or persistence diagram of $\mathrm{PL}_\tau$. This multiset is the central object of study in persistent homology, and metrics on the space of barcodes and individual barcode statistics such as entropies are the main discriminatory features in applications of TDA to time series. They detect and quantify many of the same dynamical properties of a system generating a time series that are captured by HVGs \cite{Emrani2014a,Khasawneh2016,Khasawneh2018,Perea2015,Gidea2018,Mittal2017,Tempelman2019}.

{\it Contribution.} The wide ranging overlap between practical applications of TDA and HVGs to time series is not yet reflected in theory, but an intimate connection exists. It arises from a very simple shift in perspective: given a time series, instead of considering its merge tree with respect to a piecewise interpolation, we study its merge tree over a particular weighted graph. After making this change the branching of the tree exactly reflects the hierarchical nesting of the edges in a structure we call the \emph{horizon visibility graph}, which extends the standard HVG with two additional vertices representing the past and future. Establishing this duality involves fixing an appropriate embedding for the merge tree, then proceeding recursively on a subtree decomposition of that tree. We show that metric data on the tree imply that its subtrees correspond to recursively nested subgraphs of the horizon visibility graph. As a corollary HVGs are weak duals of merge trees, a connection which suggests several directions for developing the visibility approach.

\section*{Horizon Visibility Graph and Time Series Merge Tree Duality}
\label{sec:the_result}

Relevant concepts from topology are defined here in terms of graph theory. This simplifies the presentation and illustrates the connection to HVGs more clearly. For additional details and general topological definitions see \cite{Zomorodian2005,Edelsbrunner2010}. All time series, graphs and trees are finite. Without loss of generality time series are strictly positive. We begin with our simple extension to the HVG.

\begin{definition}\label{def:horizon_visibility_graph}
Given a time series $\tau=(x_1,\ldots,x_n)$ its \emph{horizon visibility graph} $\mathrm{HVG}_\infty(\tau)$ is defined to be the horizontal visibility graph of $\tau_\infty = (\infty,x_1,\ldots,x_n,\infty)$. 
\end{definition}

The remainder of this section provides a formal foundation for the graph $\mathrm{HVG}_\infty(\tau)$ in the framework of 0-dimensional homology over a filtered simplicial complex. Recall that a \emph{weighted graph} $G=(V,E,f)$ is a graph $(V,E)$ along with a weight function $f:E\to\mathbb{R}$. If the weight function is clear from context we use ``graph''. Assume all weights are positive.

\begin{definition}
Given a graph $G=(V,E,f)$ and $a\in\mathbb{R}$ define the \emph{sublevel graph} $G_a$ to be the subgraph of $G$ whose edges have weight no greater than $a$: $G_a := (V,E_a,f) \subseteq G$, where $E_a := \{ e\in E : f(e)\leq a \}$.
\end{definition}
Note that by definition all vertices of $G$ appear in its sublevel graphs and only edges are included or excluded depending on their weights.

\begin{lemma}\label{lem:sequence_of_sublevel_graphs}
The weight function $f$ on a finite graph $G=(V,E,f)$ induces a strictly increasing sequence of sublevel graphs of $G$ beginning at $(V,\emptyset)$ and ending at $G=(V,E)$.
\end{lemma}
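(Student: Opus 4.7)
The plan is to exploit finiteness of $E$ together with the total order on $\mathbb{R}$: the image $f(E)$ is a finite subset of $\mathbb{R}$, so it can be enumerated in strictly increasing order as $a_1<a_2<\cdots<a_k$. These values are the only thresholds at which the sublevel graph changes, so I will use them as the distinguished parameters of the filtration.

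First I would choose any $a_0<a_1$ and consider the sequence $G_{a_0},G_{a_1},\ldots,G_{a_k}$. By the definition of the sublevel graph, $E_{a_0}=\{e\in E:f(e)\leq a_0\}=\emptyset$ since every $f(e)\geq a_1>a_0$, giving $G_{a_0}=(V,\emptyset)$. At the other end, $E_{a_k}=\{e\in E:f(e)\leq a_k\}=E$ since $a_k=\max f(E)$, so $G_{a_k}=G$. This handles the stated endpoints.

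Next I would verify strict monotonicity. For any $i<j$ with $i,j\in\{0,1,\ldots,k\}$, the inclusion $E_{a_i}\subseteq E_{a_j}$ is immediate from $a_i<a_j$. For strictness, pick any edge $e\in E$ with $f(e)=a_{i+1}$; such an $e$ exists because $a_{i+1}\in f(E)$. Then $f(e)=a_{i+1}>a_i$ excludes $e$ from $E_{a_i}$, while $f(e)=a_{i+1}\leq a_j$ includes it in $E_{a_j}$. Hence $E_{a_i}\subsetneq E_{a_j}$, and consequently $G_{a_i}\subsetneq G_{a_j}$ as weighted subgraphs (they share the same vertex set and weight function, so the strict containment passes through the edge sets). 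Finally, I would observe that any two consecutive sublevel graphs in this sequence differ precisely by the (nonempty) set of edges of weight $a_{i+1}$, which also confirms that no intermediate sublevel graph has been omitted: for $a\in(a_i,a_{i+1})$, $E_a=E_{a_i}$.

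There is no genuine obstacle here; the lemma is essentially a bookkeeping statement, and the only subtlety is the mild one of choosing an initial parameter $a_0$ strictly below $\min f(E)$ to realise the empty sublevel graph as a member of the induced sequence. Once the enumeration of $f(E)$ is in place, everything else follows directly from the definitions.
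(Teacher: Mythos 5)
Your proof is correct and follows essentially the same route as the paper: enumerate the distinct weights in $f(E)$ in increasing order and observe that each threshold adds at least one new edge, with the endpoints $(V,\emptyset)$ and $(V,E)$ realised at the extremes. The only cosmetic difference is that you choose an arbitrary $a_0<\min f(E)$ whereas the paper sets $a_0=0$ using its standing assumption that all weights are positive.
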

\begin{proof}
Suppose that $a_1 < a_2 < \ldots < a_n$ are the distinct weights in the image $f(E)\subseteq \mathbb{R}$, write $G_i:=G_{a_i}$, and let $a_0 = 0$. Then every distinct threshold $a_i$ adds at least one new edge to $G_i$ that was not already in $G_{i-1}$ for $1\leq i\leq n$. Moreover since the $a_i$ exhaust all the distinct weights, all edges are included upon reaching the upper bound $a_n$. So we have a sequence $(V,\emptyset) = G_0 \subset G_1 \subset \ldots \subset G_n = (V,E)$ of strictly increasing sublevel graphs of $G$.
\end{proof}

\begin{definition}
Given an acyclic graph $G=(V,E,f)$ and $a\in\mathbb{R}$ say that two vertices $v,w\in V$ are \emph{$a$-connected} when any path in $G$ between them contains no weight exceeding $a$. Additionally say $v$ and $w$ are \emph{maximally $a$-connected} when any path in $G$ extending an $a$-connected path between them is not itself $a$-connected. 
\end{definition}

Being maximally $a$-connected is clearly an equivalence relation on $V$ for each $a\in\mathbb{R}$. For certain graphs the resulting one parameter family of relations has a tree structure:
\begin{lemma}\label{lem:refinement_of_partition}
For a connected graph $G=(V,E,f)$ the relation of being maximally $a$-connected induces a refinement of partitions of $V$. The refinement has the structure of a rooted tree called the \emph{merge tree} $T_G$ of $G$, with $V$ as the root, $\{ \{v\} : v\in V\}$ the leaves, and internal vertices being the maximally $a$-connected components of $G$ induced by its edge weights.
\end{lemma}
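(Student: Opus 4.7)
The plan is to reduce the maximally $a$-connected relation to ordinary graph connectedness on the sublevel graphs $G_a$ and then combine Lemma~\ref{lem:sequence_of_sublevel_graphs} with a standard merging argument. First I would observe that, since $G$ is connected and (inheriting from the preceding definition) acyclic, any two vertices $v,w\in V$ are joined by a unique path in $G$, so $v$ and $w$ are $a$-connected exactly when this path lies inside $G_a$, i.e.\ exactly when $v,w$ belong to the same connected component of $G_a$. Maximality then singles out the full connected component, so the maximally $a$-connected equivalence classes coincide with the components of $G_a$, and they form a partition $P_a$ of $V$.

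Next I would invoke Lemma~\ref{lem:sequence_of_sublevel_graphs} to reduce $a\in\mathbb{R}$ to the finite filtration $(V,\emptyset)=G_0\subset G_1\subset\cdots\subset G_n=G$, producing a finite sequence of partitions $P_0,P_1,\ldots,P_n$ with $P_i$ the components of $G_i$. Because $G_{i-1}\subseteq G_i$, any two vertices connected in $G_{i-1}$ remain connected in $G_i$, so $P_{i-1}$ refines $P_i$. At the bottom, $G_0$ has no edges, giving $P_0=\{\{v\}:v\in V\}$; at the top, connectedness of $G$ gives $P_n=\{V\}$. This is the claimed refinement of partitions of $V$.

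To extract the tree $T_G$, I would take its vertex set to be the disjoint union $\bigsqcup_i P_i$ of all blocks appearing at any stage, and for each $i<n$ and each $B\in P_i$ draw an edge from $B$ to the unique block $B'\in P_{i+1}$ containing it (uniqueness being the refinement property just proved). Each non-root block has a unique parent, so the resulting graph has no cycles; each block can be traced upward by following parents to the unique block $V\in P_n$, so it is connected; hence it is a rooted tree with root $V$. The leaves are exactly the blocks of $P_0$, namely the singletons $\{v\}$, and the remaining blocks—the proper merged components at intermediate thresholds—form the internal vertices, matching the statement.

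The only real obstacle is making sure the construction actually yields a tree rather than a forest or a DAG: acyclicity comes for free because every edge strictly increases the filtration index $i$, so what really has to be checked is that there is a single root, and this is exactly where the hypothesis that $G$ is connected is used (without it, $P_n$ would contain one block per component of $G$, giving a forest of merge trees). A minor bookkeeping point is that when consecutive partitions happen to coincide on some block, one should identify the corresponding vertices of $T_G$ so that internal vertices genuinely correspond to distinct maximally $a$-connected components; this is a cosmetic contraction that does not affect the tree structure.
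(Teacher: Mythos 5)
Your proposal is correct and follows essentially the same route as the paper: both invoke Lemma~\ref{lem:sequence_of_sublevel_graphs} to reduce to the finite filtration $G_0\subset\cdots\subset G_n$, identify the root with $V$ (via connectedness) and the leaves with the singletons at $G_0$, and obtain the tree from the containment of each level's classes in the next level's. Your added details---the explicit identification of maximally $a$-connected classes with components of $G_a$, and the remark about contracting blocks that persist across consecutive levels---are just careful elaborations of steps the paper leaves implicit.
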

\begin{proof}
By Lemma \ref{lem:sequence_of_sublevel_graphs} there is a strictly increasing sequence of sublevel graphs $(V,\emptyset)=G_0 \subset G_1 \subset \ldots \subset G_n = (V,E)$ corresponding to the distinct weights $a_0=0$ and $a_1<\ldots<a_n$ in $f(E)$. When $G$ is connected then for all $a\geq a_n$ the only $a$-equivalence class is the full vertex set $V$, giving the root of $T_G$. On the other hand the $a_0$-equivalence classes are singletons each containing an element of $V$, giving the leaves.

Between these extremes, each $a_i$-equivalence class at level $G_i$ is fully contained in some $a_{i+1}$-equivalence class at level $G_{i+1}$ since each $a$-connected component of $G$ is automatically $b$-connected for all $b\geq a$, and we have $a_{i+1}>a_i$ for all $i$ by definition. Iterating $i$ over $0,1,\ldots,n-1$ gives the tree structure. 
\end{proof}

Thus each vertex in a merge tree has a height $a_i\in f(E)$ and each edge spans some half open interval $[a_i,a_j)$ for $0\leq i<j\leq n = |f(E)|$ corresponding to the heights of its incident vertices. In what follows a proper subtree is taken to include the (half) \emph{root edge} above its root vertex covering this interval. Call a subtree \emph{principal} when it contains the descendants of all of its vertices, and without loss of generality also assume the root edge of a full merge tree has a fixed finite length $r\in(0,\infty)$, say $r=1$. Note that merge trees contain only vertices of degree $d=1$ or $d\geq 3$ and in general may be non-binary trees.

\begin{corollary}\label{cor:subtrees_correspond_to_connected_components}
The map taking a principal subtree $\Lambda\subseteq T_G$, whose root vertex is at height $a_i$ and whose root edge spans $[a_i,a_j)$ for some $i<j$, to the subgraph $\gamma_\Lambda$ of $G$ whose vertices are the leaves of $\Lambda$, is a bijection from the set of principal subtrees to the set of all distinct maximally $a$-connected components of $G$ for $a\in\mathbb{R}$. In particular the component $\gamma_\Lambda$ is maximally $a$-connected exactly for $a\in [a_i,a_j)$ under this map.
\end{corollary}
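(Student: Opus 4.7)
The strategy is to reduce the statement to the identification already established in Lemma \ref{lem:refinement_of_partition}: there the internal vertices of $T_G$ at each critical height $a_i$ were defined to be exactly the maximally $a_i$-connected components of $G$, and the leaves at $a_0$ correspond to singletons of $V$. Since a principal subtree is uniquely determined by its root vertex, this gives a tautological bijection between principal subtrees of $T_G$ and vertices of $T_G$; composing with the vertex-to-component identification yields the map $\Lambda\mapsto\gamma_\Lambda$, and the remaining work is to verify this composition agrees with the leaf-based description, and that the interval of $a$-values comes out as stated.

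First, I would fix a principal subtree $\Lambda$ with root $v$ at height $a_i$, and show that the leaves of $\Lambda$ are exactly the singletons $\{u\}$ such that $u$ lies in the maximally $a_i$-connected equivalence class $C_v$ represented by $v$. This follows by downward descent through $T_G$: each child of $v$ represents a strictly finer equivalence class contained in $C_v$, iterated children yield still finer classes, and terminating at leaves exhausts the elements of $C_v$. Hence $\gamma_\Lambda$, the subgraph of $G$ on the leaves of $\Lambda$, has vertex set $C_v$ and coincides with the maximally $a_i$-connected component associated with $v$.

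Second, for the interval claim I would appeal to Lemma \ref{lem:sequence_of_sublevel_graphs} together with the construction of $T_G$: the class $C_v$ is unchanged in the filtration between its birth at $a_i$ and the first threshold $a_j$ at which a newly added edge of $G$ merges $C_v$ into a strictly larger class. By definition of the merge tree this threshold $a_j$ is the height of the parent of $v$, so the span $[a_i,a_j)$ of the root edge of $\Lambda$ matches exactly the range of $a$ for which $\gamma_\Lambda$ is a maximally $a$-connected component. Injectivity of the map is then immediate, since distinct principal subtrees have distinct roots and distinct roots of $T_G$ (even at the same height) represent distinct equivalence classes. Surjectivity follows because Lemma \ref{lem:refinement_of_partition} already showed that every maximally $a$-connected component of $G$, for any $a\in\mathbb{R}$, coincides with an $a_i$-equivalence class for some critical height, and thus corresponds to an internal vertex or leaf of $T_G$, i.e.\ to the root of a unique principal subtree.

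The main obstacle is largely notational rather than structural: one must keep straight the distinction between a leaf of $\Lambda$ (a vertex of $T_G$ represented as a singleton of $V$) and the underlying vertex of $G$, and handle the edge case of the full tree carefully, where the convention that the root edge has finite length $r$ assigns the interval $[a_n,a_n+r)$ to the maximal component $V$. Beyond these bookkeeping points, the substance of the proof is a direct rereading of the recursive construction already implicit in Lemma \ref{lem:refinement_of_partition}.
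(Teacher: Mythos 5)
Your proof is correct and follows exactly the route the paper intends: the paper states this result without an explicit proof, treating it as an immediate consequence of the construction in Lemma \ref{lem:refinement_of_partition}, and your argument (principal subtrees are determined by their roots, roots are the maximally $a$-connected classes, leaves exhaust the class by refinement, and the persistence interval $[a_i,a_j)$ is read off from the filtration of Lemma \ref{lem:sequence_of_sublevel_graphs}) is precisely the fleshing-out of that implicit reasoning. Your attention to the root-edge convention and to the distinction between leaves of $\Lambda$ and vertices of $G$ is appropriate and consistent with the paper's conventions.
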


To illustrate this bijection consider the merge tree $T_G$ in Figure \ref{fig:merge_tree_of_graph}. The line at $a=4.5$ intersects the root edges of three principal subtrees of $T_G$. The associated sublevel graph components are: the single leftmost vertex `born' at $G_0$ which `dies' at $G_6$, the single rightmost vertex born at $G_0$ which dies at $G_5$, and the six vertex chain born at $G_4$ which dies at $G_5$. Death is always by inclusion in to a larger connected component.
\begin{figure}[htb]
\centering
\includegraphics[width=0.45\textwidth]{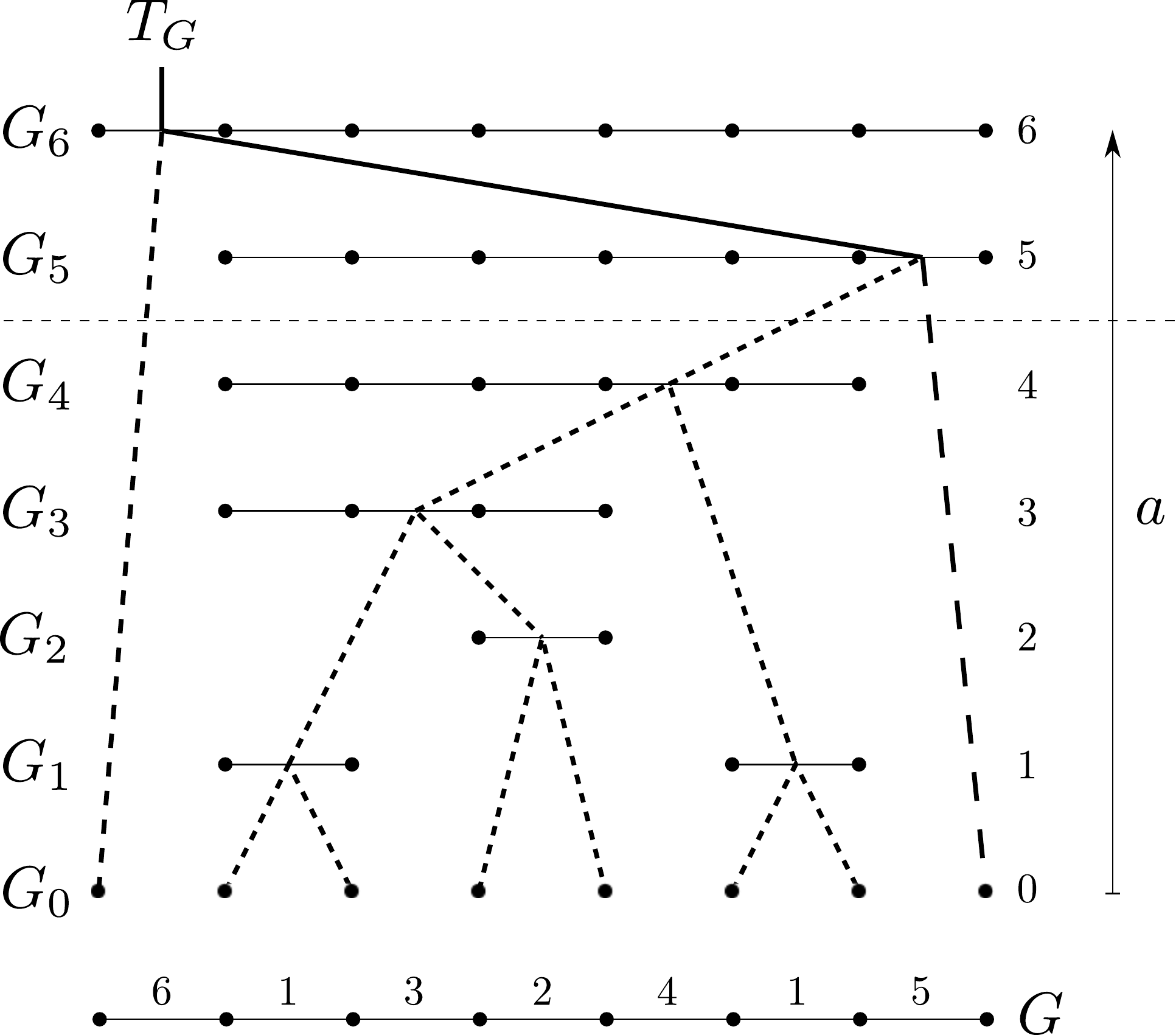}
\caption{Weighted graph $G$ and its merge tree $T_G$. Vertices of $T_G$ are connected components of the sublevel graphs $G_i$. Subtrees at level $a=4.5$ are described in the text.}\label{fig:merge_tree_of_graph}
\end{figure}
The bijection of Corollary \ref{cor:subtrees_correspond_to_connected_components} links each connected component appearing in some $G_i$ to a (principal) subtree rooted at it inside $T_G$ in the same way. In total there are fifteen subtrees of $T_G$ in the figure, including the eight single vertex trees, and there are also fifteen distinct $a$-connected components of $G$ as $a$ varies through $\mathbb{R}$, including the eight single vertices at $G_0$.

We are now ready to connect the idea of a merge tree over a graph to time series.
\begin{definition}
Given a time series $\tau = (x_1, \ldots, x_N)$ define the \emph{time series weighted path} $\check\tau$ to be the graph $\check\tau = (V,E,f)$ where:
\begin{itemize}
\item $V = \{0,1,\ldots,N\}$
\item $E = \{ e_i = (i-1,i) : 1\leq i \leq N\}$
\item $f:E\to \mathbb{R};e_i\mapsto x_i$ for $1\leq i\leq N$
\end{itemize}
If $\tau$ is empty then $\check\tau$ is defined as the graph with a single vertex and no edges.
\end{definition}
Since $\check\tau$ is connected and acyclic the following concept is well-defined.
\begin{definition}
The \emph{merge tree of a time series} $\tau$ is the merge tree $T_{\check\tau}$ of the weighted path $\check\tau$.
\end{definition}

Lemma \ref{lem:refinement_of_partition} implies that the leaves of $T_{\check\tau}$ are exactly the vertices $V=\{0,1,\ldots,N\}$ of $\check\tau$, so $T_{\check\tau}$ is actually an \emph{ordered tree}: we can order the children of any vertex according to the smallest leaves descended from them. This implies $T_{\check\tau}$ has an essentially unique plane embedding and so the discussion below is independent of the particular embedding chosen \cite{Gross2001}.

Consider $T_{\check\tau}$ embedded in the 2-sphere $S^2$ as follows.
Leaves are ordered anti-clockwise around the boundary $S^1$ of the disk $D^2$, and the vertex at the unbounded end of the root edge of $T_{\check\tau}$, labelled $\infty$, is placed on $S^1$ between leaves $0$ and $n$. Then points on $S^1$ are identified giving an embedding in $S^2$. Note that all leaves and the vertex $\infty$ are identified by this process. The initial $D^2$ embedding is shown grey in Figure \ref{fig:embedding_with_dual}.

The key outcome of this paper is that expressing the geometric relationships between connected regions inside $S^2 \setminus T_{\check\tau}$ with respect to the embedding above, and thus to any plane embedding, captures the geometric structure of $\tau$. Such relationships are described by the dual graph, but as $S^2$ is more difficult to visualise than $D^2$ we work in $D^2$ and adjust our definition of duality to compensate. Call points in $D^2$ \emph{external} when on $S^1$ otherwise \emph{internal}.
\begin{definition}\label{def:dual_of_embedding}
Given a graph $G$ embedded in $D^2$ define its \emph{dual} $G^*$ as follows. Internal vertices of $G^*$ are connected regions of $D^2\setminus (G\cup S^1)$ whose boundary does not include all of $S^1$. External vertices of $G^*$ are connected regions of $D^2\setminus (G\cup S^1)$ whose boundary \emph{does} include all of $S^1$, of which there is at most one. Edges in $G^*$ connect vertices whose primal regions share two sides of an edge, including with themselves.
\end{definition}

In particular the dual is well-defined for merge trees.
\begin{definition}\label{def:dual_of_merge_tree}
The \emph{dual} to a time series merge tree is the graph dual $T^*_{\check\tau}$ of its merge tree embedded in the closed disk $D^2$ as above. The \emph{metric dual} additionally copies edge lengths from the primal tree to the edges in the dual graph.
\end{definition}
An example of a dual to a time series merge tree is shown in Figure \ref{fig:embedding_with_dual}, illustrating the general relationship we now show: the dual is the horizon visibility graph.
\begin{figure}[htb]
\centering
\includegraphics[width=0.4\textwidth]{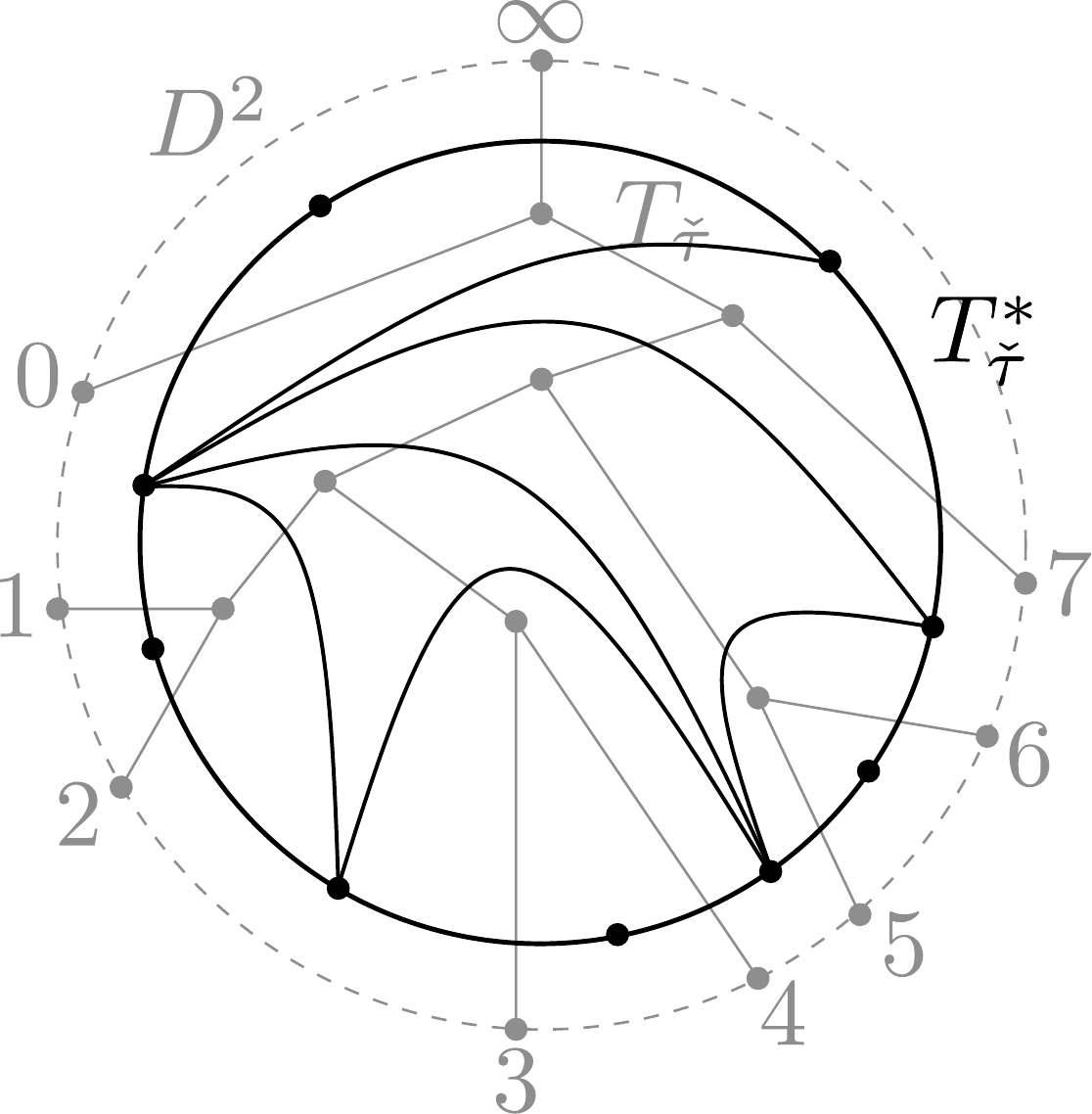}
\caption{A time series merge tree $T_{\check\tau}$ embedded in $D^2$ and its dual $T^*_{\check\tau}$. The latter is exactly $\mathrm{HVG}_\infty(\tau)$.}
\label{fig:embedding_with_dual}
\end{figure}

\begin{theorem}\label{thm:duality}
Given a time series $\tau = (x_1, \ldots, x_N)$ its horizon visibility graph $\mathrm{HVG}_\infty(\tau)$ is exactly the dual of its merge tree: $\mathrm{HVG}_\infty(\tau) = T^*_{\check\tau}$.
\end{theorem}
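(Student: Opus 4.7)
My plan is to exhibit a direct bijection between the vertices and edges of $\mathrm{HVG}_\infty(\tau)$ and those of $T^*_{\check\tau}$, and then verify that the bijection respects incidence. Using the embedding fixed above, list the $N+2$ boundary points on $S^1$ anti-clockwise as $\infty, 0, 1, \ldots, N$, and write $A_p$ for the open arc between the $p$th and $(p+1)$st of these (indices mod $N+2$); the $N+2$ arcs $A_0, \ldots, A_{N+1}$ correspond cyclically to the $N+2$ positions of $\tau_\infty = (\infty, x_1, \ldots, x_N, \infty)$.

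For the vertex bijection, because $T_{\check\tau}$ is a tree whose leaves and $\infty$ vertex all lie on $\partial D^2$, each arc $A_p$ is contained in the boundary of exactly one region of $D^2 \setminus (T_{\check\tau} \cup S^1)$ and each region contains exactly one arc. Sending the region whose boundary meets $A_p$ to the HVG$_\infty$ vertex at position $p$ of $\tau_\infty$ then matches the two vertex sets in the sense of Definition~\ref{def:horizon_visibility_graph}.

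For the edge bijection I use Corollary~\ref{cor:subtrees_correspond_to_connected_components}. Every non-$\infty$ vertex $u$ of $T_{\check\tau}$ has a unique parent edge and corresponds under the corollary to a maximally $a$-connected interval $C = \{c_l, c_l+1, \ldots, c_r\} \subseteq \{0, \ldots, N\}$ of $\check\tau$. Adopting the convention $x_0 = x_{N+1} = \infty$, the defining condition for $C$ to be such a maximal component is
\[
\max(x_{c_l+1}, \ldots, x_{c_r}) \; < \; \min(x_{c_l}, x_{c_r+1}),
\]
which is identical to the horizontal-visibility condition for an edge in $\mathrm{HVG}_\infty(\tau)$ between the positions of $\tau_\infty$ labelled $c_l$ and $c_r+1$. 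Conversely any HVG$_\infty$ edge $(p,q)$ determines such an interval $\{p,\ldots,q-1\}$, so the map sending the tree edge above $u$ to the pair $(c_l, c_r+1)$ is a bijection between the two edge sets.

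To close the argument I check that this algebraic edge bijection coincides with the geometric one coming from duality: the tree edge above $u$ must separate precisely the regions assigned to $A_{c_l}$ and $A_{c_r+1}$. Cutting $T_{\check\tau}$ at that edge yields the principal subtree $T_u$, whose leaves are $c_l, c_l+1, \ldots, c_r$, and its complement. Since $T_{\check\tau}$ is an ordered plane tree with children arranged by smallest descendant leaf, the leaves of $T_u$ form a consecutive sub-arc of the cyclic boundary sequence $0, 1, \ldots, N, \infty$ on $S^1$; consequently the two sides of the cut edge are bounded by the two arcs $A_{c_l}$ and $A_{c_r+1}$ flanking this sub-arc. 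This planar-embedding step is the only genuinely geometric piece of the proof and is where I expect the main obstacle to lie; I would settle it rigorously by induction on the height of $u$, using that each merge-tree vertex's children represent disjoint consecutive sub-intervals of $\{0, \ldots, N\}$ whose union is the parent's interval.
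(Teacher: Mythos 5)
Your proposal is correct and follows essentially the same route as the paper's proof: both hinge on Corollary~\ref{cor:subtrees_correspond_to_connected_components} to identify tree edges (equivalently, principal subtrees) with maximally $a$-connected intervals of $\check\tau$, observe that the nonemptiness condition $\max(x_{c_l+1},\ldots,x_{c_r})<\min(x_{c_l},x_{c_r+1})$ is precisely the horizontal visibility condition in $\tau_\infty$, and then use the ordered plane embedding to show the corresponding root edge separates exactly the two flanking regions. Your packaging as a single three-way bijection rather than two separate implications is a minor organisational difference, and the geometric step you flag as the main obstacle is the same one the paper settles via the fan decomposition of ordered trees.
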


\begin{proof}
Let $\tau_\infty=(x_{-\infty},x_1,\ldots,x_n,x_\infty)$ where $x_{\pm\infty}=\infty$, and consider the embedding of $T_{\check\tau}$ in $D^2$ described above. Each value $x_i$ for $i=1,\ldots,n$ corresponds to the interterval on $S^1$ between leaves $i-1$ and $i$ of $T_{\check\tau}$. Similarly the values $x_{\pm\infty}$ correspond to the intervals on $S^1$ leading left and right from the root vertex labelled $\infty$. Thus the connected regions $\hat i$ in $D^2\setminus (T_{\check\tau}\cup S^1)$ for $i=-\infty,1,\ldots,n,\infty$ are in bijection with the values $x_i$ in $\tau_\infty$.

By the definition of duality for time series merge trees it then suffices to show that $x_i$ and $x_j$ are horizontally visible in $\tau_\infty$, written $x_i\sim x_j$, if and only if regions $\hat i$ and $\hat j$ in $D^2\setminus (T_{\check\tau}\cup S^1)$ share a unique boundary edge in $T_{\check\tau}$. In other words we want to show that $x_i\sim x_j \Longleftrightarrow |\hat i \cap_E \hat j| = 1$ where $\cap_E$ represents intersections of region boundaries, namely along edges. Since each pair of regions bounded by the tree and $S^1$ share at most one edge, it suffices to show that $\hat i \cap_E \hat j \neq \emptyset$.

Suppose $i<j$ and $x_i \sim x_j$. Then for all $k$ satisfying $i<k<j$ we know that $x_k < x_i,x_j$. Let $a^* := \max \{ x_k : i<k<j \}$ and $a_* := \min\{x_i,x_j\}$. Then for any $a\in [a^*,a_*)$ no edge on the path from vertex $i$ to vertex $j-1$ in $\check\tau$ exceeds $a$, but the edges $e_i$ and $e_j$ both do. In other words the pair $(i,j-1)$ is maximally $a$-connected in $\check\tau$ over this half open interval. By Corollary \ref{cor:subtrees_correspond_to_connected_components} there exists a principal subtree $\Lambda \subset T_{\check\tau}$, whose root edge $e_\Lambda$ corresponds to a maximally $a$-connected component of $\check\tau$ for $a\in [a^*,a_*)$ and whose leaf set is $L_\Lambda = \{i,i+1,\ldots,j-2,j-1\} \subset V$. Since $\Lambda$ is principal it has no other leaves and $e_\Lambda$ must be shared between regions $\hat i$ and $\hat j$ as illustrated in Figure \ref{fig:subtree_separation_of_regions}, so $\hat i \cap_E \hat j \neq \emptyset$ as required.
\begin{figure}[htb]
\centering
\includegraphics[width=0.45\textwidth]{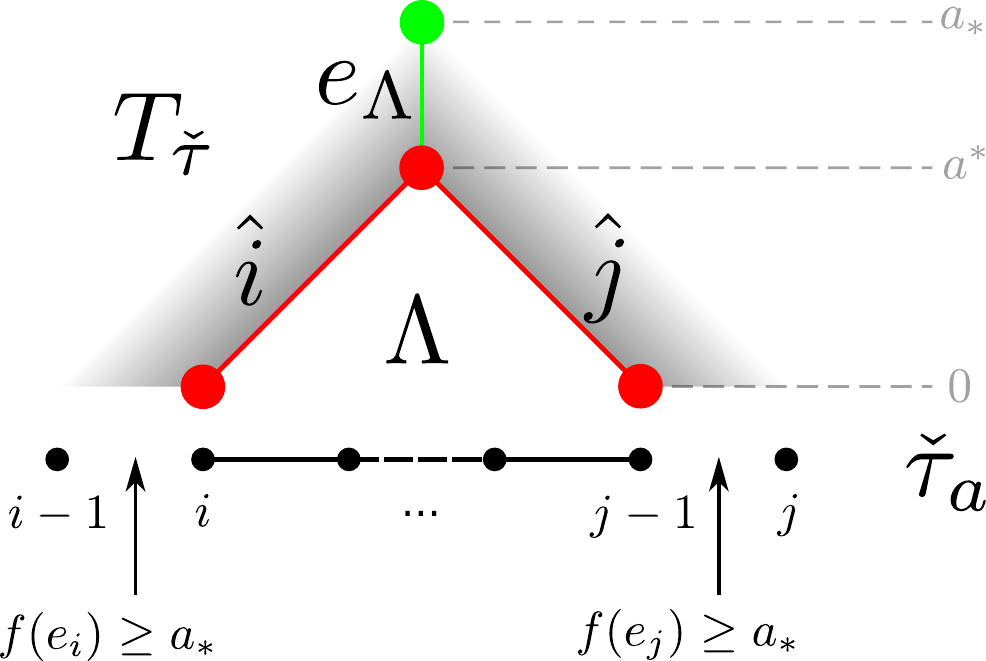}
\caption{The principal subtree $\Lambda\subset T_{\check\tau}$ spanning vertices $\{i,i+1,\ldots,j-1\}$ corresponds to a maximally connected component in $\check\tau$ when $x_i$ and $x_j$ are horizontally visible.}
\label{fig:subtree_separation_of_regions}
\end{figure}

In the other direction suppose that $i<j$ and regions $\hat i, \hat j$ share an edge $\hat e_{i,j} = \hat i \cap_E \hat j$ in $T_{\check\tau}$. Note that every ordered tree can be recursively decomposed into a fan of nonempty principal subtrees whose roots are the immediate children of the containing tree's root \cite{Dershowitz1980}. Since $\hat i$ and $\hat j$ share an edge this decomposition implies that there exists a principal subtree $\Lambda\subset T_{\check\tau}$ whose leaves are exactly $L_\Lambda = \{i,\ldots,j-1\}$ between the two regions. By Corollary \ref{cor:subtrees_correspond_to_connected_components} we are back in the situation illustrated in Figure \ref{fig:subtree_separation_of_regions}: $\Lambda$ corresponds to a connected component of $\check\tau$ that is maximally $a$-connected for $a$ in an interval $[\alpha,\beta)$ where $\alpha$ is the value at the root vertex of $\Lambda$ and $\beta$ is the lowest upper bound of values on the edge emerging upwards from the root. But by the construction of the merge tree we must have $\alpha = a^* = \max\{x_k:i<k<j\}$ being the maximum weight on the path between vertices $i$ and $j-1$, and $\beta = a_* = \min\{x_i,x_j\}$ the value at which the first neighbouring edge is added to the connected component spanning $L_\Lambda$ as $a$ increases. So $x_i\sim x_j$ as required.
\end{proof}

Time series merge trees are metric trees so Theorem \ref{thm:duality} allows us to extend Definition \ref{def:horizon_visibility_graph} to the following.
\begin{definition}
The \emph{persistence weighted} horizon visibility graph of a time series is the metric dual of its merge tree. In particular it has weights $p=\beta-\alpha$ on its edges where the half open interval $[\alpha,\beta)$ is spanned by the corresponding edge in the merge tree.
\end{definition}

Every rooted tree is naturally directed with all edges oriented towards, or away from, the root. Therefore Theorem \ref{thm:duality} also holds for directed horizon visibility graphs when a consistent rule for orienting dual edges is applied throughout the proof above. Moreover, since horizon visibility graphs are also horizontal visibility graphs with maximal endpoints we have a similar but weaker result for HVGs as follows.

\begin{definition}\label{def:weak_dual_of_merge_tree}
The \emph{weak dual} $T^\circ_{\check\tau}$ to a time series merge tree $T_{\check\tau}$ is the subgraph of its dual $T^*_{\check\tau}$ created by removing vertices $\widehat{\pm\infty}$.
\end{definition}

The term `weak' is used here because excluding connected regions $\widehat{\pm\infty}$ respects the intuition that regions whose boundary includes the infinite root edge are themselves unbounded, and such regions are omitted from the standard weak dual. With this intuition formalised the next result immediately follows.

\begin{corollary}\label{cor:weak_duality}
Given a time series $\tau=(x_1,\ldots,x_n)$ its horizontal visibility graph $\mathrm{HVG}(\tau)$ is exactly the weak dual of its merge tree: $\mathrm{HVG}(\tau)=T^\circ_{\check\tau}$.
\end{corollary}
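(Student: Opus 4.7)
The plan is to derive this corollary as a near-immediate consequence of Theorem \ref{thm:duality}, using Definitions \ref{def:horizon_visibility_graph} and \ref{def:weak_dual_of_merge_tree} to bridge the two sides of the equality. First I would invoke Theorem \ref{thm:duality} to obtain $\mathrm{HVG}_\infty(\tau) = T^*_{\check\tau}$, and then identify precisely what is removed on each side of this equation when passing to $\mathrm{HVG}(\tau)$ on the left and to $T^\circ_{\check\tau}$ on the right. If the two removals match up under the duality bijection of Theorem \ref{thm:duality}, the corollary follows.

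On the graph side, Definition \ref{def:horizon_visibility_graph} builds $\mathrm{HVG}_\infty(\tau)$ from $\tau_\infty = (\infty, x_1, \ldots, x_n, \infty)$. I would verify that deleting the two endpoint vertices labelled $x_{\pm\infty}$, together with all edges incident to them, recovers $\mathrm{HVG}(\tau)$. This is immediate from the visibility criterion: for any pair of indices $1\leq i<j\leq n$ the set of intermediate indices is identical in $\tau$ and in $\tau_\infty$, so the visibility relation among the original $x_i$ is unchanged, and the two infinite endpoints can only contribute edges incident to themselves.

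On the tree side, Definition \ref{def:weak_dual_of_merge_tree} defines $T^\circ_{\check\tau}$ as the result of removing exactly the vertices $\widehat{\pm\infty}$ from $T^*_{\check\tau}$. The proof of Theorem \ref{thm:duality} sets up an explicit bijection between values $x_i$ of $\tau_\infty$ and regions $\hat i$ of $D^2 \setminus (T_{\check\tau}\cup S^1)$, under which the two values $x_{\pm\infty}$ correspond precisely to the two regions $\widehat{\pm\infty}$ adjacent to the root edge of the embedded tree. Consequently, the set of vertices removed from $\mathrm{HVG}_\infty(\tau)$ to obtain $\mathrm{HVG}(\tau)$ is mapped by the duality to the set of vertices removed from $T^*_{\check\tau}$ to obtain $T^\circ_{\check\tau}$, and since this removal also deletes exactly the same incident edges on both sides, the restriction of the duality gives $\mathrm{HVG}(\tau) = T^\circ_{\check\tau}$.

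The hard part, if it can be called that, is only the labelling bookkeeping: making explicit that the symbol $x_{\pm\infty}$ used on the time series side of the duality bijection is the same object as the region label $\widehat{\pm\infty}$ used on the tree side. Everything else is a one-line observation on the horizontal visibility relation and a direct appeal to the main theorem.
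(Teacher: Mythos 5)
Your proposal is correct and matches the paper's approach: the paper simply asserts that the corollary ``immediately follows'' from Theorem \ref{thm:duality} together with the observation that horizon visibility graphs are horizontal visibility graphs with maximal endpoints, which is exactly the argument you spell out. Your added verification that deleting the $x_{\pm\infty}$ vertices from $\mathrm{HVG}_\infty(\tau)$ recovers $\mathrm{HVG}(\tau)$, and that these vertices correspond under the duality to the regions $\widehat{\pm\infty}$ removed in Definition \ref{def:weak_dual_of_merge_tree}, is sound and just makes explicit what the paper leaves implicit.
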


Note however that in general the graph $\mathrm{HVG}(\tau)$ is \emph{not} dual to a tree.

\section*{Discussion}
\label{sec:horizon_visibility_graphs}

{\it Reconstruction Results.} The first thing Theorem \ref{thm:duality} and Corollary \ref{cor:weak_duality} imply is that the \emph{nesting structure} of edges in a visibility graph carry all of the relevant geometric information brought over from a time series. This helps explain the widely observed discrimination power of the degree sequence of an HVG, which by duality is the sequence of counts of internal boundary edges of regions under the merge tree. These strongly constrain the possible subtree decompositions a given tree can present. For example the next result follows quickly, where a canonical time series is one that is in general position except its end values are global maxima.
\begin{corollary}[\cite{Luque2017}]\label{cor:canonical_degrees}
Canonical horizontal visibility graphs are uniquely determined by their degree sequences.
\end{corollary}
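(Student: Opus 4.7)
The plan is to apply the weak duality of Corollary \ref{cor:weak_duality} and induct on $n=|\tau|$. Since $\mathrm{HVG}(\tau)=T^\circ_{\check\tau}$, its degree sequence records the number of $T_{\check\tau}$-edges bounding each inter-leaf region $\hat i$ that are not shared with $\widehat{\pm\infty}$, so reconstructing the HVG reduces to recovering the plane tree $T_{\check\tau}$. Canonicity pins down the top of that tree: general position makes every merge strictly binary and ordered, and the two endpoint maxima force the root to have one of leaves $0$ or $n$ as an immediate child (say leaf $0$ when $x_1>x_n$), with the other child being a subtree whose root in turn has leaf $n$ and a further subtree $T''$ as children; $T''$ is the merge tree of the interior sub-series $(x_2,\ldots,x_{n-1})$. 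It remains only to determine $T''$.

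Let $k\in\{2,\ldots,n-1\}$ index the interior maximum $\max\{x_2,\ldots,x_{n-1}\}$; by Corollary \ref{cor:subtrees_correspond_to_connected_components}, $k$ corresponds to the root of $T''$. A direct visibility check shows that the only HVG edge between positions $<k$ and $>k$ is the endpoint edge $(1,n)$, because any such edge would require $x_k<\min(x_i,x_j)$ and $x_k$ is the interior maximum. Consequently the HVG decomposes cleanly as $\mathrm{HVG}(\tau_L)\cup\mathrm{HVG}(\tau_R)\cup\{(1,n)\}$, identified at vertex $k$, where $\tau_L=(x_1,\ldots,x_k)$ and $\tau_R=(x_k,\ldots,x_n)$ are both canonical — each endpoint remains a strict maximum over the subsequence it terminates. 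The induced sub-degree sequences are $(d_1-1,d_2,\ldots,d_{k-1},d_k^L)$ and $(d_k^R,d_{k+1},\ldots,d_{n-1},d_n-1)$ with $d_k^L+d_k^R=d_k$; by induction each side is uniquely reconstructible, and gluing at vertex $k$ together with the edge $(1,n)$ rebuilds $\mathrm{HVG}(\tau)$.

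The main obstacle is pinpointing $k$ from $(d_1,\ldots,d_n)$ alone. I would argue that $k$ is the unique interior index whose induced split yields two sequences, each realizable as the degree sequence of a shorter canonical HVG for some compatible choice $d_k^L+d_k^R=d_k$. Realizability is restrictive since each realizing HVG must itself admit a canonical recursive decomposition at every scale, and I expect a parity and partial-sum argument on the degrees to rule out every non-maximum split position. Formalizing this uniqueness is the technical core of the proof; once it is in place, the induction closes and $(d_1,\ldots,d_n)$ determines $\mathrm{HVG}(\tau)$.
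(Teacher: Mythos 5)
There is a genuine gap, and you have located it yourself: the entire difficulty of the statement is concentrated in recovering the split position $k$ (and the split $d_k^L+d_k^R=d_k$) from the degree sequence alone, and your proposal leaves exactly that step as a conjecture (``I would argue\ldots'', ``I expect a parity and partial-sum argument\ldots''). Nothing in the write-up establishes that the interior maximum is the \emph{unique} index admitting a decomposition into two realizable canonical degree sequences, nor that the compatible choice of $d_k^L$ and $d_k^R$ is unique once $k$ is fixed; without both uniqueness claims the induction does not close, since a wrong split could in principle also reconstruct to \emph{some} canonical HVG, just not the right one. The surrounding scaffolding (the top-down decomposition at the interior maximum, the fact that $(1,n)$ is the only edge crossing $k$, the canonicity of $\tau_L$ and $\tau_R$) is correct, but it is preparation for a lemma you have not proved.

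The paper avoids this obstacle entirely by peeling the tree from the bottom rather than splitting it from the top. Via duality the degree of vertex $i$ is the leaf-to-leaf distance $d_{i-1,i}$ between consecutive leaves of the (binary, because canonical) merge tree, and an ordered binary tree always contains a cherry, i.e.\ some $i$ with $d_{i-1,i}=2$ --- in HVG language, a degree-$2$ vertex, which is a local minimum of $\tau$. Unlike the position of the interior maximum, a degree-$2$ entry is \emph{directly legible} in the degree sequence; deleting that cherry yields a strictly smaller binary tree whose consecutive-leaf distances are determined from the old ones, and induction finishes the proof with no realizability or uniqueness-of-split argument needed. If you want to salvage your top-down route you must supply the missing identification of $k$; the bottom-up route shows that the right invariant to peel on is already sitting in the degree sequence.
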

\begin{proof}
The largest proper principal subtree of the merge tree of a canonical time series is a binary rooted tree. Ordered binary rooted trees with equal edge lengths and $n+1$ leaves $0,1,\ldots,n$ are uniquely determined by the $n$ leaf-to-leaf distances between order neighbours: $d_{0,1},d_{1,2},\ldots,d_{n-1,n}$. This follows by an induction on the number of leaves and observing that for some $i\in\{1,\ldots,n\}$ we have $d_{i-1,i}=2$, meaning a pair of edges can always be removed to give a strictly smaller tree with known distances between its remaining neighbours.  
\end{proof}
Indeed any constraint on a time series forcing the largest proper principal subtree of its merge tree to be binary implies the unique reconstruction of its HVG from its degree sequence in the same way.

Unique reconstruction results over arbitrary time series are less straightforward than over subclasses like the one in Corollary \ref{cor:canonical_degrees}. However there exist several theorems and algorithms for both binary and non-binary tree reconstruction developed over many decades for applications to phylogenetic trees and more widely \cite{Gusfield1997}. To take one example consider the well known \emph{neighbour joining} algorithm which uniquely reconstructs a tree from is full pairwise leaf-to-leaf additive distance matrix \cite{Saitou1987}. We can use this to quickly prove the following result.
\begin{theorem}
The in and out degree sequences of a directed horizon visibility graph uniquely determine it.
\end{theorem}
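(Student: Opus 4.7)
The plan is to invoke Theorem \ref{thm:duality} to recast the problem as reconstruction of the merge tree $T_{\check\tau}$, extract a leaf-to-leaf additive distance matrix from the in- and out-degree sequences, and apply neighbour joining to uniquely recover $T_{\check\tau}$ (and hence its dual) from that matrix.

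First I would translate the in- and out-degrees into combinatorial counts on $T_{\check\tau}$. By Theorem \ref{thm:duality} and Corollary \ref{cor:subtrees_correspond_to_connected_components}, each directed edge of the directed $\mathrm{HVG}_\infty$ corresponds to a principal subtree of $T_{\check\tau}$ whose leaves form a contiguous interval $[a, b] \subseteq \{0, 1, \ldots, N\}$ in the ordered leaves $\ell_0, \ldots, \ell_N$. Writing $L(a)$ for the number of principal subtrees with leftmost leaf $\ell_a$ and $R(b)$ for the number with rightmost leaf $\ell_b$, the out-degree at vertex $i \in \{1, \ldots, N\}$ equals $L(i)$ and the in-degree equals $R(i-1)$, while the out-degree at the $-\infty$ vertex recovers $L(0)$ and the in-degree at the $+\infty$ vertex recovers $R(N)$. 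Thus the in/out degree sequences determine the sequences $L$ and $R$ on $\{0, 1, \ldots, N\}$.

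Next I would compute the unit-edge-length leaf distances of $T_{\check\tau}$ from $L$ and $R$. The principal subtrees form a laminar family on $\{0, \ldots, N\}$ representable as a balanced bracket sequence with $L(i)$ opens just before and $R(i)$ closes just after each position $i$. Standard stack accounting gives $\mathrm{depth}(\ell_a) = D(a) - 1$ where $D(a) := \sum_{i \leq a} L(i) - \sum_{i < a} R(i)$, and the depth of the least common ancestor of $\ell_a$ and $\ell_b$ (for $a < b$) is $\min_{i \in [a, b-1]} (D(i) - R(i)) - 1$, the minimum stack height between the two positions. Consequently $d(\ell_a, \ell_b) = \mathrm{depth}(\ell_a) + \mathrm{depth}(\ell_b) - 2\,\mathrm{depth}(\mathrm{LCA})$ is a function of $L$ and $R$ alone, and by construction is an additive tree metric.

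Neighbour joining then uniquely reconstructs the unrooted shape of $T_{\check\tau}$ from this metric; the rooted, ordered embedded tree is recovered from the known leaf order $\ell_0 < \cdots < \ell_N$ together with the identification of the root as the unique vertex whose removal separates the leaves into the ordered blocks dictated by the outermost nested brackets of the parenthesization. Applying Theorem \ref{thm:duality} to the resulting tree returns the original directed horizon visibility graph. The main technical obstacle is the middle step: verifying that the stack-based parenthesization correctly translates $L, R$ into depths, common-ancestor counts, and a genuine additive tree metric. Once this is in place, the reduction to neighbour joining and the duality step are routine.
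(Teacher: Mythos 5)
Your proposal is correct and follows the same overall architecture as the paper's proof: reduce via Theorem \ref{thm:duality} to reconstructing the merge tree, convert the in/out degree sequences into a leaf-to-leaf additive distance matrix, run neighbour joining, and dualise back. The difference lies in how the distance matrix is obtained. The paper identifies $d_{i,i+1}=d_i^+ + d_i^-$ directly and then fills in the remaining off-diagonals by recurrences of the form $d_{i,i+2}^\pm = d_{i,i+1}^\pm + d_{i+1,i+2}^\pm - \min(d_{i,i+1}^-, d_{i+1,i+2}^+)$, whereas you pass through the laminar family of leaf-intervals of principal subtrees, encode it as a balanced bracket sequence with $L(i)$ opens and $R(i)$ closes at position $i$, and read off depths and LCA depths by stack height. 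Your identification of out-degrees with $L$ and in-degrees with $R$ is the correct dual reading of the paper's ``boundary edges on the left and right hand sides,'' and the min-of-stack-heights formula for the LCA is exactly the content hidden in the paper's $\min(\cdot,\cdot)$ recurrence, so the two computations agree. One observation worth making: your bracket sequence is itself the standard encoding of an ordered rooted tree, and since $L$ and $R$ determine that sequence uniquely, the ordered merge tree is already reconstructed at that point --- the detour through an unrooted additive metric and neighbour joining (and the slightly delicate re-rooting and re-ordering step you describe) is unnecessary in your version, whereas the paper genuinely needs NJ because it only produces the metric. If you keep the bracket-sequence argument, I would drop neighbour joining entirely and state the reconstruction directly; as written, the proof is sound but carries a redundant step.
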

\begin{proof}
Suppose we are given in and out sequences $d^+ = (d_0^+,d_1^+,\ldots,d_n^+,d_{n+1}^+)$ and $d^- = (d_0^-,d_1^-,\ldots,d_n^-,d_{n+1}^-)$ for a horizon visibility graph $G$, where $d_0^\pm$ and $d_{n+1}^\pm$ are the in and out degrees of the regions to the left and right of the root vertex (labelled $n+1$ here) respectively. Consider its dual tree $G^*$ in $D^2$ and its $n+2$ connected regions. The degrees $d^+$ and $d^-$ fix the number of boundary edges each region has on its left and right hand sides, with respect to the root of the smallest principal subtree of $G^*$ containing the region. Using these values we can reconstruct the leaf-to-leaf distance matrix $D$ for the merge tree $G^*$ as follows. Due to the circular order of external vertices take all leaf labels and subscripts to be $\mathrm{mod}(n+2)$ from now on. 

Write $d_{i,j}$ for the path length from leaf $i$ to leaf $j$. Then we must have that $d_{i,i+1}=d_i^+ + d_i^-$ for $0\leq i\leq n+1$ giving the first off-diagonal in $D$. Writing $d_{i,i+1}^+$ for $d_i^+$ and $d_{i,i+1}^-$ for $d_i^-$ it is straightforward to show that we get a recurrence for the second off-diagonal: $d_{i,i+2} = d_{i,i+2}^+ + d_{i,i+2}^-$ for $0\leq i\leq n$, where the summands are given by $$d_{i,i+2}^\pm = d_{i,i+1}^\pm + d_{i+1,i+2}^\pm - \min(d_{i,i+1}^-, d_{i+1,i+2}^+).$$
Similar recurrences give the remaining off diagonals. Once this is done we can apply the neighbour joining algorithm to $D$ to reconstruct the correct unrooted merge tree topology. Since the root edge is already known via its exterior vertex label $n+1$ we have the correct rooted tree as well. Finally compute the oriented dual to give the horizon visibility graph. 
\end{proof}

{\it Trend Detection.} An interesting practical impact of moving to horizon visibility, beyond simpler reconstruction results, is that key geometric information about leading and trailing trends in the data is no longer lost. Horizontal visibility graphs are unable to distinguish between simple trends such as $\tau_1 = (1,2,3)$, $\tau_2 = (3,2,1)$, $\tau_3 = (1,2,1)$, and $\tau_4 = (1,1,1)$ because the weak dual omits edges on the outer boundary of a merge tree. However as shown in Figure \ref{fig:different_duals} horizon visibility graphs detect the difference. This implies sliding window techniques over data with statistical trends will be able to detect the trends using, for example, expected degree sequences.
\begin{figure}[htb]
\centering
\includegraphics[width=0.45\textwidth]{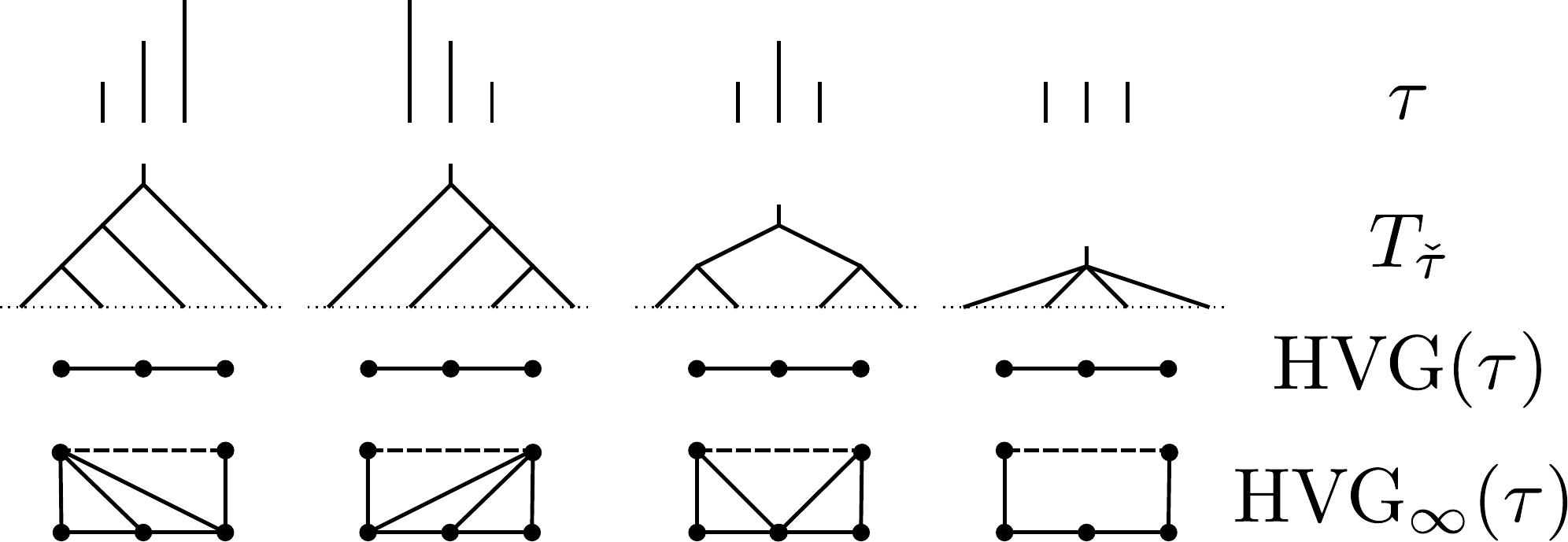}
\caption{Merge trees and their horizontal and horizon visibility graphs for simple trends. The root edge is dashed in $\mathrm{HVG}_\infty$.}
\label{fig:different_duals}
\end{figure} 

{\it Connections to TDA.} There is a direct connection between the time series merge trees defined in this paper and the merge trees underlying earlier applications in TDA, which are typically over continuous domains. The connection is captured by the following propositions, which are straightforward to prove. A single \emph{Horton pruning} of a tree is the operation of cutting off its leaves and their parental edges from the tree, then removing any remaining chains of degree-two vertices. Such operations have been studied in the context of quantifying the fractal dimension of random trees \cite{Zaliapin2012,Kovchegov2016}.
\begin{proposition}
The branch structure of the merge tree of a piecewise linear interpolation of a finite time series is exactly the first Horton pruning of its horizon visibility graph (via duality).
\end{proposition}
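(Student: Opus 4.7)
The plan is to work directly with $T_{\check\tau}$, since by Theorem~\ref{thm:duality} the horizon visibility graph is dual to this tree and Horton pruning is a purely topological operation on the tree side. The statement then reduces to showing that a single Horton pruning of $T_{\check\tau}$ produces a tree whose branch structure coincides with that of the merge tree of $\mathrm{PL}_\tau$.

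First I would classify the internal vertices of $T_{\check\tau}$ according to the local structure of $\tau$. By Lemma~\ref{lem:refinement_of_partition}, every internal vertex is born when some edge $e_i$ of weight $x_i$ is added to the sublevel filtration of $\check\tau$, and $e_i$ fuses the maximally $a$-connected components of its two endpoints for $a$ just below $x_i$. A short check shows that the component containing vertex $i-1$ is the singleton $\{i-1\}$ precisely when $x_{i-1}>x_i$ (treating $x_0,x_{N+1}=+\infty$ as the sentinels supplied by the horizon extension), and symmetrically for vertex $i$. This yields three exhaustive cases: (i) both components are singletons, so $e_i$ fuses two leaves (a \emph{local-minimum merge}, occurring exactly at a local minimum of $\tau_\infty$); (ii) exactly one is a singleton, so $e_i$ extends a non-singleton component by a single vertex (a \emph{monotonic-step merge}); (iii) both are non-singletons, so $e_i$ fuses two previously born branches (a \emph{local-maximum merge}).

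Next I would track the effect of one Horton pruning under this classification. Cutting the $N+1$ singleton leaves and their parental edges turns every local-minimum merge into a new leaf (it retains only its root-edge parent), reduces every monotonic-step vertex to degree $2$, and leaves every local-maximum merge at degree $\geq 3$. Contracting the resulting degree-$2$ chains then deletes exactly the monotonic-step vertices, leaving a tree whose leaves are the local-minimum merges and whose internal branchings are the local-maximum merges, with incidences inherited from the principal-subtree decomposition of $T_{\check\tau}$ (Corollary~\ref{cor:subtrees_correspond_to_connected_components}). In parallel, the merge tree of $\mathrm{PL}_\tau$ is standardly built so that its leaves biject with local minima of the interpolant and its internal branchings with its local maxima, while monotonic segments contribute no topological features. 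Matching the two bijections vertex by vertex gives the claimed equality of branch structures, and duality transports the result back to $\mathrm{HVG}_\infty(\tau)$.

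The main obstacle is handling degeneracies and the boundary cleanly. Ties among the values of $\tau$ can produce simultaneous merges and non-binary vertices in $T_{\check\tau}$, and one needs a tie-breaking convention (or a generic-position perturbation argument) to ensure that the three-way classification, the Horton pruning, and the corresponding PL merge tree all respond to coincidences in the same way. Equally delicate is the treatment of $x_1$ and $x_N$: here the horizon sentinels $x_{\pm\infty}=\infty$ do essential work by promoting boundary descents into genuine local minima, exactly matching how the PL merge tree on a closed interval is usually assigned its boundary leaves, and this is why the statement is phrased for $\mathrm{HVG}_\infty$ rather than the plain HVG. Once these conventions are fixed, the case analysis above carries through verbatim.
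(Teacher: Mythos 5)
The paper offers no proof of this proposition at all --- it is stated alongside the remark that both TDA-connection propositions are ``straightforward to prove'' --- so there is nothing to compare against line by line; your argument has to be judged on its own, and it is essentially correct. Working on the tree side via Theorem~\ref{thm:duality} is the right move, and your three-way classification of internal vertices of $T_{\check\tau}$ (local-minimum merge, monotonic-step merge, local-maximum merge, with the sentinels $x_0=x_{N+1}=\infty$ handling the boundary) is exactly what makes the Horton pruning computation go through: deleting the $N+1$ singleton leaves turns case-(i) vertices into leaves, drops case-(ii) vertices to degree two so the chain suppression removes them, and preserves case-(iii) vertices as branch points. The one step you assert rather than prove is the identification of incidences: that a case-(iii) vertex is the parent, in the pruned tree, of precisely the branches that merge at the corresponding interior local maximum of $\mathrm{PL}_\tau$. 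The clean way to close this is to exhibit, for each threshold $a$, the bijection between non-singleton maximally $a$-connected components $\{i,\dots,j\}$ of $\check\tau$ and the components of $\{\mathrm{PL}_\tau\leq a\}$ containing the grid points $i+1,\dots,j$, and to note that this bijection commutes with the inclusions as $a$ increases; this upgrades your vertex-by-vertex matching to an isomorphism of the two branch structures rather than merely a bijection of vertex sets. Your closing caveats about ties and boundary conventions are apt (the paper implicitly assumes general position throughout), and your observation that the horizon sentinels are what make the boundary leaves come out right is a point the paper itself does not spell out.
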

Similarly, when metric data are included we can recover the barcodes studied by TDA in full. For details of how the \emph{Elder Rule} computes barcodes on trees see \cite{Curry2018}.
\begin{proposition}
The Elder Rule on the first Horton pruning of a persistence weighted horizon visibility graph gives the barcode of its piecewise linear interpolation.
\end{proposition}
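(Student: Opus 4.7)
The plan is to reduce to the preceding proposition, which already identifies the unweighted topology of the first Horton pruning of $T_{\check\tau}$ with the branch structure of the merge tree of $\mathrm{PL}_\tau$. Given that, it suffices to verify first that the metric inherited through the persistence weights on $\mathrm{HVG}_\infty(\tau)$ pushes through pruning to the canonical metric of the PL merge tree, and then to invoke the Elder Rule on a metric merge tree, which is known to compute the $0$-dimensional persistence barcode of the generating function \cite{Curry2018}.

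For the metric part, I would argue that each internal vertex of $T_{\check\tau}$ sits at height $x_i$ for some $i$: by Lemma \ref{lem:refinement_of_partition} this vertex records the threshold at which edge $e_i$ first merges two maximally connected components of $\check\tau$. Horton pruning preserves the heights of all surviving vertices, since it only removes vertices or contracts degree-two chains. A short case analysis on the local structure of $\tau$ around index $i$ shows that the internal vertex at height $x_i$ has both $T_{\check\tau}$-children that are leaves of $\check\tau$ exactly when $x_i$ is a local minimum of $\mathrm{PL}_\tau$, has neither child a leaf exactly when $x_i$ is a local maximum, and has exactly one leaf child on a monotone slope (so becomes degree two and is contracted). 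Consequently the pruned tree carries leaves at the local minimum values of $\mathrm{PL}_\tau$ and internal vertices at its local maximum values, matching both the topological and metric data of the PL merge tree.

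To match the parent--child structure, I would use Corollary \ref{cor:subtrees_correspond_to_connected_components}: walking up from a leaf of the pruned tree at height $x_i$ through contracted shoulder vertices, the first surviving internal vertex is the first local maximum encountered on the path in $\check\tau$ emerging from edge $e_i$, which is exactly the sublevel-set merge event in $\mathrm{PL}_\tau$ that absorbs the connected component born at the local minimum $x_i$. So the pruned metric tree coincides with the merge tree of $\mathrm{PL}_\tau$. Applying the Elder Rule then outputs its barcode in the standard way, with each finite bar appearing as the half-open interval between the leaf height at which a younger branch is born and the height of the internal vertex at which it is subsumed; these interval lengths are precisely the sums of the persistence weights along the corresponding younger branches in the pruned tree.

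The main obstacle is the metric bookkeeping through contraction of degree-two chains combined with the handling of boundary and tied values: a left or right endpoint of $\tau$ acts as a local minimum by default, and consecutive equal values can create shoulder-type extrema that blur the min/max classification used above. Both can be resolved by an infinitesimal index-based perturbation in the spirit of general-position arguments in TDA, which leaves the resulting barcode unchanged.
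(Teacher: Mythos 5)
Your proposal is correct and follows exactly the route the paper intends: the paper states this proposition without proof (calling it ``straightforward''), placing it immediately after the proposition identifying the first Horton pruning with the branch structure of the merge tree of $\mathrm{PL}_\tau$ and citing Curry for the Elder Rule on metric merge trees, which is precisely your reduction. Your case analysis (two leaf children at a strict local minimum, none at a strict local maximum, one on a monotone slope, hence contraction) and the observation that chain contraction sums persistence weights supply the details the paper omits, and your perturbation remark correctly handles ties and endpoints.
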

This makes horizon graphs more sensitive to certain features. For example the horizon visibility graph detects \emph{monotonic subsequences} in a time series, which are invisible to trees over piecewise linear and piecewise constant interpolations, as shown in Figure \ref{fig:monotonic_subsequences}. This means they can detect changes in frequency more readily, which could be useful for topologically aware signal analysis. Similarly, metric data on the persistence weighted graph quantify the \emph{scales} at which different geometric features exist. So their weighted degree sequences, extending the combinatorial degree sequence, can distinguish between features with the same geometry appearing at different scales.
\begin{figure}[htb]
\centering
\includegraphics[width=0.4\textwidth]{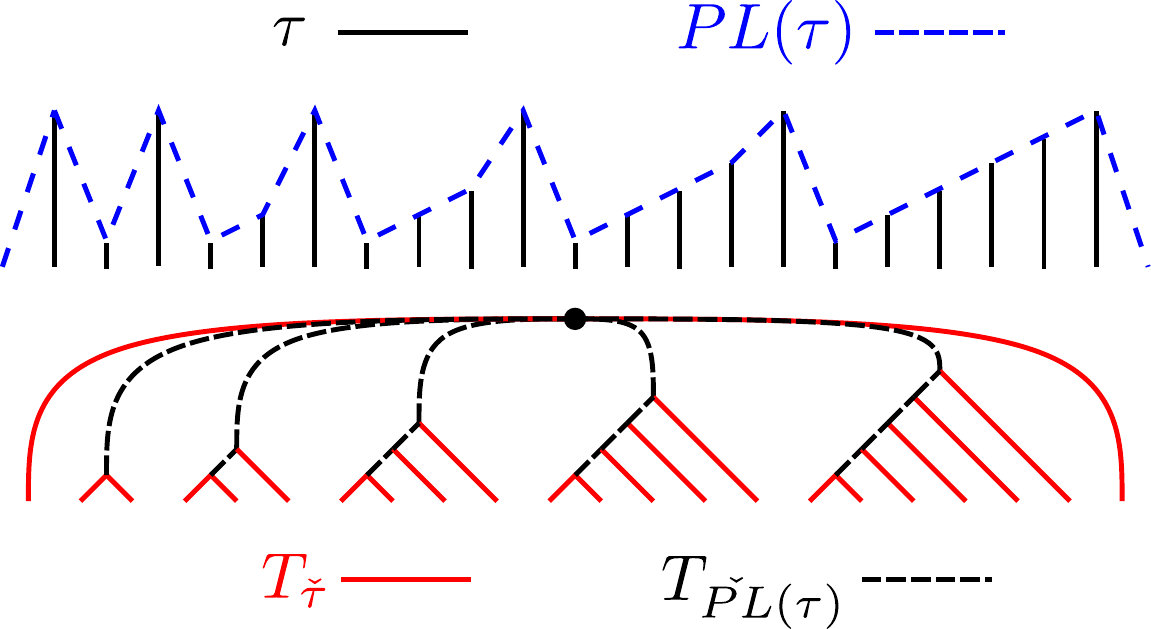}
\caption{A time series $\tau$ with monotonic subsequences of different lengths and its piecewise linear interpolation (top). The time series merge tree $T_{\check\tau}$ and its first Horton pruning (bottom). The pruned tree is the merge tree of $PL(\tau)$.}
\label{fig:monotonic_subsequences}
\end{figure}

{\it Conclusion.} Horizon visibility graphs simultaneously extend and unify HVGs and topological merge trees over piecewise interpolations of sequences. In doing so they add the ability to detect trends and the scale of geometric features, absent from HVGs, and the ability to detect monotonic subsequences and thus frequency-based geometric features, absent from trees over piecewise linear interpolations. More importantly there exists a wide body of work on the theoretical properties of combinatorial and metric trees in general \cite{Drmota2009,Evans2006} and topological merge trees applied to data analysis in particular \cite{Beketayev2014}, that applies to these graphs. This setting offers a number of directions to build on the connections established here.

Finally, it should be noted that while horizon visibility graphs are dual to trees, their graph properties capture features that may not be apparent in the tree representation. In particular the tree analogue of degree sequences, leaf-to-leaf path lengths between ordered neighbours, is not widely studied in applications of trees, so the horizon and horizontal visibility representations continue to express useful and complementary features of their own.


\end{document}